\newenvironment{bprooftree}
  {\leavevmode\hbox\bgroup}
  {\DisplayProof\egroup}
\def\SP#1{\textsuperscript{{#1}}}
\newtheorem{proposition}{Proposition}[section]
\newtheorem{definition}{Definition}[section]
\title{\normalfont\spacedallcaps{On Computational Paths and the Fundamental Groupoid of a Type}} 
\author{
   Arthur F. Ramos\SP{*}\\
  \texttt{afr@cin.ufpe.br}
  \and
  Ruy J. G. B. de Queiroz\SP{*}\\
  \texttt{ruy@cin.ufpe.br}
\and
  Anjolina G. de Oliveira\SP{*}\\
  \texttt{ago@cin.ufpe.br}
}
\date{} 
\begin{document}

\maketitle 


{\let\thefootnote\relax\footnotetext{* \textit{Centro de Informática, Universidade Federal de Pernambuco, Recife-PE, Brazil}}}


\renewcommand{\sectionmark}[1]{\markright{\spacedlowsmallcaps{#1}}} 
\lehead{\mbox{\llap{\small\thepage\kern1em\color{halfgray} \vline}\color{halfgray}\hspace{0.5em}\rightmark\hfil}} 

\pagestyle{scrheadings} 


\begin{abstract} 

The main objective of this work is to study mathematical properties of computational paths. Originally proposed by de Queiroz \& Gabbay (1994) as `sequences of rewrites', computational paths can be seen as the grounds on which the propositional equality between two computational objects stand. Using computational paths and categorical semantics, we take any type $A$ of type theory and construct a groupoid for this type. We call this groupoid the fundamental groupoid of a type $A$, since it is similar to the one obtained using the homotopical interpretation of the identity type. The main difference is that instead of being just a semantical interpretation, computational paths are entities of the syntax of type theory. We also expand our results, using computational paths to construct fundamental groupoids of higher levels.
\end{abstract}

{\bf Keywords:} Computational paths, fundamental groupoid, equality theory, term rewriting systems, type theory, category theory, higher categorical structures.


\section{Introduction}

There seems to be little doubt that the identity type is one of the most intriguing concepts of  Martin-L\"of's Type Theory. This claim is supported by recent groundbreaking discoveries. In 2005, Vladimir Voevodsky \cite{Vlad1} discovered the Univalent Models, resulting in a new area of research known as Homotopy Type Theory \cite{Steve1}. This theory is based on the fact that a term of some identity type, for example $p: Id_{A}(a,b)$, has a clear homotopical interpretation. The interpretation is that the witness $p$ can be seen as a homotopical path between the points $a$ and $b$ within a topological space $A$. This simple interpretation has made clear the connection between Type Theory and Homotopy Theory, generating groundbreaking results, as one can see in \cite{hott,Steve1}. It is important to emphasize that one important fact of the homotopic interpretation is that the homotopic paths exist only in the semantic sense. In other words, there is no formal entity in the language of type theory that represents these paths. They are not present in the syntax of Type Theory.

Given a space $A$, we can think of a structure $\Pi(A)$ formed by points of the space $A$ and homotopical paths between these points. Using straightforward equations, one can easily prove for $\Pi(A)$ that the groupoid equations hold up to homotopy. We call this \emph{weak} structure (it is weak since the equalities do not hold `on the nose', but only up to homotopy) the fundamental groupoid of the space $A$. Since each type $A$, using the homotopical interpretation, can be semantically interpreted as being a space, we can consider $\Pi(A)$ as being the fundamental groupoid of the type $A$.

As we already mentioned, the paths of the homotopical interpretation do not have a counterpart in the syntax of type theory, but exist only as semantical interpretations. With that in mind, adding the concept of paths to the syntax of type theory was the focus of some of our recent works \cite{Ruy1, Art1en, Ruy5}. In these works, we argued that it is possible to formalize the concept of path as an formal entity of the equality theory. This formal entity, known as computational paths, establishes the equality between two terms of the same type. In fact, we argued that the identity type is just the type of this new entity. That way, we added the concept of path to the syntax of type theory, instead of considering it only as a semantic interpretation.

Our main objective in this work is to construct a fundamental groupoid for a type $A$ using the concept of computational path, yet having the same semantics as in the homotopical interpretation. We will also go a little further, showing that it is possible to construct higher groupoid structures, like the weak 2-groupoid of a type $A$ usually written as $\Pi_{2}(A)$.

\section{Computational Paths} \label{path}

As previously mentioned, computational paths will be the main focus here. For that reason, a good understanding of the meaning of this entity is essential. The concept of computational path is inspired on the equality theory of $\lambda$-calculus, known as $\lambda\beta\eta-equality$ \cite{lambda}. This theory estabilishes when two $\lambda$-terms are equal. It establishes a set of axioms and rules of inference that one can use. The idea is that a term $A$ is $\lambda\beta\eta-equal$ to a term $B$ when one can reach $B$ from $A$ after the application of a set of axioms and rules of inference. The application of these axioms and rules of inference generates a path. As a very simple example, since $(\lambda x.x)z \rhd_{\beta} z$, then we say that $(\lambda x.x)z =_{\beta\eta} z$ because of the path $\beta((\lambda x.x)z,z)$. A more complex and interesting example will be showed after we formally define a computational path. Since $\lambda\beta\eta$-equality establishes equalities between terms of the $\lambda$-calculus, we need to find an equivalent theory for Type Theory, which establishes equality between terms of the same type. To do that, we just need to translate to Type Theory the axioms and rules of inference of the $\lambda\beta\eta$-equality. For example, for the product type $\Pi$, we obtain the following axioms (one can check the original axioms of $\lambda\beta\eta-equality$ in \cite{lambda}):

\bigskip

\bigskip

\noindent
\begin{bprooftree}
\hskip -0.3pt
\alwaysNoLine
\AxiomC{$N : A$}
\AxiomC{$[x : A]$}
\UnaryInfC{$M : B$}
\alwaysSingleLine
\LeftLabel{$(\beta$) \ }
\BinaryInfC{$(\lambda x.M)N = M[N/x] : B[N/x]$}
\end{bprooftree}
\begin{bprooftree}
\hskip 11pt
\alwaysNoLine
\AxiomC{$[x : A]$}
\UnaryInfC{$M = M' : B$}
\alwaysSingleLine
\LeftLabel{$(\xi)$ \ }
\UnaryInfC{$\lambda x.M = \lambda x.M' : (\Pi x : A)B$}
\end{bprooftree}

\bigskip

\noindent
\begin{bprooftree}
\hskip -0.5pt
\AxiomC{$M : A$}
\LeftLabel{$(\rho)$ \ }
\UnaryInfC{$M = M : A$}
\end{bprooftree}
\begin{bprooftree}
\hskip 100pt
\AxiomC{$M = M' : A$}
\AxiomC{$N : (\Pi x : A)B$}
\LeftLabel{$(\mu)$ \ }
\BinaryInfC{$NM = NM' : B[M/x]$}
\end{bprooftree}

\bigskip

\noindent
\begin{bprooftree}
\hskip -0.5pt
\AxiomC{$M = N : A$}
\LeftLabel{$(\sigma) \ $}
\UnaryInfC{$N = M : A$}
\end{bprooftree}
\begin{bprooftree}
\hskip 105pt
\AxiomC{$N : A$}
\AxiomC{$M = M' : (\Pi x : A)B$}
\LeftLabel{$(\nu)$ \ }
\BinaryInfC{$MN = M'N : B[N/x]$}
\end{bprooftree}

\bigskip

\noindent
\begin{bprooftree}
\hskip -0.5pt
\AxiomC{$M = N : A$}
\AxiomC{$N = P : A$}
\LeftLabel{$(\tau)$ \ }
\BinaryInfC{$M = P : A$}
\end{bprooftree}

\bigskip

\noindent
\begin{bprooftree}
\hskip -0.5pt
\AxiomC{$M: (\Pi x : A)B$}
\LeftLabel{$(\eta)$ \ }
\RightLabel {$(x \notin FV(M))$}
\UnaryInfC{$(\lambda x.Mx) = M: (\Pi x : A)B$}
\end{bprooftree}

\bigskip

 Aside from these axioms, we can also apply a change of bounded variables. To simplify the notation, if not previously mentioned, consider that each computational path that appears in this work is considered modulo change of bounded variables.

\begin{definition}
Let $a$ and $b$ be elements of a type $A$. Then, a \emph{computational path} $s$ from $a$ to $b$ is a composition of applications of the inference rules of the equality theory of type theory. We denote that by $a =_{s} b$.
\end{definition}

The composition is given by applications of the transitive rule. We can better understand this fact by analyzing an example.  consider the construction of the path between $(\lambda y.yx)(\lambda w.zw) $ and $zx$. Starting from $(\lambda y.yx)(\lambda w.zw)$, we have that $(\lambda y.yx)(\lambda w.zw) \rhd_{\eta} (\lambda y.yx)z$. Then, we have the path $\eta((\lambda y.yx)(\lambda w.zw), (\lambda y.yx)z)$. From $(\lambda y.yx)z$, we have that $(\lambda y.yx)z \rhd_{\beta} zx$. Therefore, we have a path $\beta((\lambda y.yx)z,zx)$. We now need to compose these paths. To do this, we can apply the transitivity to obtain the desired path between  $(\lambda y.yx)(\lambda w.zw)$ and $zx$. Hence, the equality is established by the path $\tau(\eta((\lambda y.yx)(\lambda w.zw), (\lambda y.yx)z),\beta((\lambda y.yx)z,zx))$. So, in the end, a computational path is a term built out of the composition of (definitional) equality identifiers. A notion of canonical (path) identifier will be defined, and a set of reduction rules for (path) terms will be given explicitly.

In our recent works, we argue that computational paths are in fact the formal inhabitants of the identity type, since they can be seen as a `derivation-term' of a proof of propositional equality. Since a path $a =_{s} b : A$ establishes the equality between $a$ and $b$, then we can think of $s(a,b)$ as being an object of $Id_{A}(a,b)$. One can check \cite{Ruy1, Art1en,Ruy5} for more details.

\section{Reductions between computational paths}

In the previous section, we showed that computational paths establishes the equality between two terms of the same type. In this section, our objective is to investigate if two seemingly different computational paths could be considered equals, i.e., if there is the possibility of establishing a path between paths. Consider the simple case of a reflexive path $\rho$ that establishes that $a =_{\rho} a$. If we apply the axiom of symmetry $\sigma$, we obtain the path $a =_{\sigma(\rho)} a$. Since we applied the axiom $\sigma$ to the trivial case of reflexivity, we obtained a path that is just a redundant form of the reflexivity. In that case, the paths $\rho$ and $\sigma(\rho)$ should be considered equals, i.e., it should be possible to reduce $\sigma(\rho)$ to just $\rho$. The same thing happens if, given any path $a =_{s} b$, we apply the symmetry two times in succession, obtaining the path $a =_{\sigma(\sigma(s))} b$. Inverting a path two times is equivalent as just using the original path, i.e., $\sigma(\sigma(s))$ should be reduced to $s$.

As one could notice, the aforementioned redundancies have been originated from combination of applications of the axioms of the equality theory for type theory (as instance, the redundancy of our first example has been originated by the combination of applications of reflexivity and symmetry). In the previous section, using the example of the product type $\Pi$, we showed that we have a total of 8 equality axioms. Since we can combine these axioms in different situations, one should expect that the number of possible redundancies might be high. Fortunately, all these redundancies and reductions have already been mapped by \cite{Anjo1} and further developed (and given identifiers) by \cite{Ruy1}. That way, \cite{Anjo1} created a system that maps all these reductions, called $LND_{EQ}-TRS$, with a total of 39 reduction rules. To construct the fundamental groupoid, 7 rules will be essential. They are as follows:

\begin{itemize}

\item Reductions involving $\sigma$ and $\rho$:

$\sigma(\rho) \rhd_{sr} \rho$ \quad $\sigma(\sigma(r)) \rhd_{ss} r$

\item Reductions involving $\tau$

$\tau(r,\sigma(r)) \rhd_{tr} \rho$ \quad $\tau(\sigma(r),r) \rhd_{tsr} \rho$

$\tau(r,\rho) \rhd_{trr} r$ \quad $\tau(\rho,r) \rhd_{tlr} r$

\item Reductions involving $\tau$ and $\tau$

$\tau(\tau(t,r),s) \rhd_{tt} \tau(t, \tau(r,s))$

\end{itemize}

As one can see, an identifier is given to each reduction rule. For example, our initial examples are resolved by applications of the rule $sr$ and the rule $ss$ respectively.

It is important to note that the system $LND_{EQ}-TRS$, is terminating and confluent, as proved in \cite{Anjo1,Ruy2,Ruy3,RuyAnjolinaLivro}.

We call each reduction rule as a rewrite rule and usually abbreviated to $rw$-rule. Associated to an $rw$-rule, we have the following definitions:

\begin{definition}
Let $s$ and $t$ be computational paths. We say that $s \rhd_{1rw} t$ (read as: $a$ $rw$-contracts to $b$) iff we can obtain $t$ from $s$ by an application of only one $rw$-rule. If $s$ can be reduced to $t$ by finite number of $rw$-contractions, then we say that $s \rhd_{rw} t$ (read as $s$ $rw$-reduces to $t$).

\end{definition}

\begin{definition}
Let $s$ and $t$ be computational paths. We say that $s =_{rw} t$ (read as: $s$ is $rw$-equal to $t$) iff $t$ can be obtained from $s$ by a finite (perhaps empty) series of $rw$-contractions and reversed $rw$-contractions. In other words, $s =_{rw} t$ iff there exists a sequence $R_{0},....,R_{n}$, with $n \geq 0$, such that

\centering $(\forall i \leq n - 1) (R_{i}\rhd_{1rw} R_{i+1}$ or $R_{i+1} \rhd_{1rw} R_{i})$

\centering  $R_{0} \equiv s$, \quad $R_{n} \equiv t$
\end{definition}

By definition, the $rw$-equality is a class of equivalence, since it was defined as the symmetric, reflexive and transitive closure of $rw$-reduction. Since an $rw$-equality is given by a sequence of $rw$-reductions, we call this sequence an $rw$-sequence.

\subsection{Reduction between $rw$-equalities}

We showed that it is possible to think of reductions between paths and that these reductions were originated from redundancies caused by the combination of the equality axioms. Since $rw$-equalities are transitive, symmetric and reflexive, it is possible to think of reductions between two $rw$-sequences. For example, if $\theta$ is an $rw$-sequence, we can think of $\sigma(\sigma(\theta))$, which should be equivalent to $\theta$. In fact, knowing that the $ss$ rule resolves this kind of redundancy for computational paths, we can think of a $ss_{2}$ rule that resolves this redundancy for $rw$-sequences. In fact, since our rules that we showed previously only involve transitivity, reflexivity and symmetry, and that $rw$-equality has these properties, then we get an $rw$-equality version of each rule, only adding the label $2$ to make clear that we are working with $rw$-sequences and $rw$-equality. In fact, analogously to $rw$-equality, we can think of a $rw_{2}$-equality, which has the same definition, with the difference that we are working with reductions between $rw$-sequences. For that reason, $rw_{2}$-equality is also an equivalence class.

Since $rw_{2}$-equality forms a new system of rules, it is possible to think of rules that are specific to $rw_{2}$-equality. In fact, there is a specific rule that will play an essential role in the construction of our higher fundamental groupoid $\Pi_{2}(A)$. Consider the example of the path $\tau(s,t)$, such that $s \rhd_{1rw} s'$ and $t \rhd_{1rw} t'$. There are two possible $rw$-sequences: $\tau(s,t) \rhd_{1rw} \tau(s',t) \rhd_{1rw} \tau(s',t')$ and $\tau(s,t) \rhd_{1rw} \tau(s,t') \rhd_{1rw} \tau(s',t')$. As one can see, the only difference between these two $rw$-sequences are the choice of which term to develop first. Since $s$ and $t$ followed the same reductions (i.e., $t$ reduced to $t'$ in both cases and $s$ to $s'$), then these two $rw$-sequences are two ways of expressing the same result. For that reason, there should be a rule that establishes the $rw_{2}$-equality of these $rw$-sequences. We call this rule the independence of choice inside the transitivity, and it is expressed by $cd_{2}$. Then, we have that  $(\tau(s,t) \rhd_{1rw} \tau(s',t) \rhd_{1rw} \tau(s',t')) =_{rw_{2cd}} (\tau(s,t) \rhd_{1rw} \tau(s,t') \rhd_{1rw} \tau(s',t'))$. As one could see, this rule only makes sense in the context of $rw$-sequences and, for this reason, it is specific to $rw_{2}$-equality and $rw_{2}$-reduction.

\section{The fundamental groupoid of a type: $\Pi(A)$}

In this section, our objective is to show that the computational paths are capable of constructing the fundamental groupoid of a type $A$. The first thing that we should be aware is the idea of equality holding in a weak sense. In the homotopical interpretation, the idea of weak is related to a structure which the equalities only hold up to homotopy. In this work, we will use the term weak to characterize a structure which equalities only hold up to $rw$-equality or $rw_{2}$-equality. If the equalities hold 'on the nose', instead of being weak, we say that the structure is strict.

First, let's recall some basic concepts of category theory \cite{Steve2}:

\begin{definition}

Let $f: A \rightarrow B$ be an arrow of any category. $f$ is called an isomorphism if there exists a $g: B \rightarrow A$ such that $g \circ f  = 1_{A}$ and $f \circ g = 1_{B}$. $g$ is called the inverse of $f$ and can be written as $f^{-1}$.

\end{definition}

\begin{definition}
A groupoid is a category in which every arrow is an isomorphism.
\end{definition}

We can now think of a structure $\Pi(A)$ in which the objects are the objects $a : A$ and the morphisms are the computational paths between these objects:

\begin{proposition}
$\Pi(A)$ is a weak groupoid.
\end{proposition}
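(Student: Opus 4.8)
The plan is to verify directly that $\Pi(A)$, as defined, satisfies the axioms of a category in which every arrow is an isomorphism, with all equalities interpreted up to $rw$-equality rather than strict equality. First I would fix the data: objects are terms $a : A$; for objects $a, b : A$ the hom-collection $\Pi(A)(a,b)$ consists of computational paths $s$ with $a =_{s} b$, considered modulo $rw$-equality (so that composition is well-defined on equivalence classes); the identity arrow on $a$ is the reflexivity path $\rho : a =_{\rho} a$; composition of $s : a =_{s} b$ and $t : b =_{t} c$ is the path $\tau(s,t) : a =_{\tau(s,t)} c$; and the candidate inverse of $s : a =_{s} b$ is $\sigma(s) : b =_{\sigma(s)} a$.

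Next I would check the category axioms one by one, in each case exhibiting the required equality as an $rw$-equality produced by the seven rules quoted above. Associativity of composition is witnessed by the rule $\tau(\tau(t,r),s) \rhd_{tt} \tau(t,\tau(r,s))$, so $\tau(\tau(s_1,s_2),s_3) =_{rw} \tau(s_1,\tau(s_2,s_3))$. The left and right identity laws are witnessed by $\tau(\rho, r) \rhd_{tlr} r$ and $\tau(r,\rho) \rhd_{trr} r$ respectively, giving $\tau(\rho, s) =_{rw} s$ and $\tau(s,\rho) =_{rw} s$. For the groupoid condition I would take any $s : a =_{s} b$ and show $\sigma(s)$ is a two-sided inverse: $\tau(s,\sigma(s)) \rhd_{tr} \rho$ gives $\tau(s,\sigma(s)) =_{rw} \rho = 1_a$, and $\tau(\sigma(s), s) \rhd_{tsr} \rho$ gives $\tau(\sigma(s), s) =_{rw} \rho = 1_b$, which is exactly the condition from Definition~\ref{} that $\sigma(s) = s^{-1}$ makes every arrow an isomorphism. (The rules $sr$ and $ss$ are not strictly needed for the axioms themselves but confirm that $\sigma$ on $\rho$ and double-$\sigma$ behave coherently.)

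The one genuinely delicate point, and the step I expect to be the main obstacle, is checking that composition is well-defined on $rw$-equivalence classes: if $s =_{rw} s'$ and $t =_{rw} t'$ then $\tau(s,t) =_{rw} \tau(s',t')$. This is a congruence property of $rw$-equality with respect to the $\tau$ constructor, and it is what forces the whole development to live in the weak (up-to-$rw$) setting rather than the strict one. I would argue it by induction on the length of the $rw$-sequences witnessing $s =_{rw} s'$ and $t =_{rw} t'$: a single $rw$-contraction inside $s$ lifts to an $rw$-contraction inside $\tau(s,t)$ (and likewise for $t$), since the $rw$-rules may be applied to subterms; chaining these, plus symmetry and transitivity of $rw$-equality, yields $\tau(s,t) =_{rw} \tau(s',t')$. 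It is also worth remarking that this is precisely the situation governed by the $cd_{2}$ rule discussed above — the two natural orders in which one develops $s \to s'$ and $t \to t'$ inside $\tau$ give $rw$-sequences that are $rw_2$-equal — but at the level of $\Pi(A)$ we only need that both land in the same $rw$-class, so $rw_2$-equality is not required here.

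Finally I would collect these verifications into the statement that $\Pi(A)$ is a category and that every arrow $s$ is an isomorphism with inverse $\sigma(s)$, hence a groupoid; and since every defining equation holds only up to $rw$-equality rather than on the nose, it is a \emph{weak} groupoid in the sense fixed at the start of this section. The proof is essentially a bookkeeping exercise matching each categorical law to one of the seven quoted reduction rules, the only real content being the congruence of $\tau$ with respect to $=_{rw}$.
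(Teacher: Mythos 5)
Your verification is correct and matches the paper's own calculation rule for rule: associativity from $tt$, the identity laws from $tlr$ and $trr$, and the two-sided inverse $\sigma(s)$ from $tr$ and $tsr$ (the paper's composition is written $s \circ t = \tau(t,s)$, yours in diagrammatic order, which is immaterial). The one place you diverge is the setup: you take the hom-collections to be computational paths \emph{modulo} $rw$-equality and then spend your main effort on the congruence of $\tau$ with respect to $=_{rw}$, whereas the paper does not quotient at all --- its morphisms are the paths themselves, and the groupoid laws are only asserted to hold up to $rw$-equality, which is exactly what ``weak'' means in Section~4. This matters for two reasons. First, your own write-up is internally in tension: if you do quotient by $rw$-equality, the laws you verify hold on the nose in the quotient, so you have built a \emph{strict} groupoid, and your closing remark that the structure is weak ``since every defining equation holds only up to $rw$-equality'' no longer applies (the paper makes precisely this distinction one level up, where $\Pi(A)(a,b)$ with morphisms taken modulo $rw_{2}$-equality is declared strict). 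Second, the unquotiented presentation is what the paper needs later: the $rw$-equalities between paths are retained as the $2$-morphisms of $\Pi_{2}(A)$, so collapsing them at level one would discard the structure the next section is built from. Your congruence argument (an $rw$-contraction in a subterm lifts to one in $\tau(s,t)$) is sound and is a reasonable thing to record, but it is only needed under your quotient reading, not for the proposition as the paper states and proves it.
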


\begin{proof}

First, we need to prove that $\Pi(A)$ is a weak category. To do that, we need to define composition of morphisms and the identity arrow. We already know that composition of paths $s \circ t$ is given by application of the transitivity, i.e., $s \circ t = \tau(t,s)$. The identity morphism of an object $a$ is given by the reflexive path $a =_{\rho_{a}} a$. We need now to check the associative and identity laws. The associativity equation holds weakly, we just need to use the $tt$ rule:

\begin{center}
$\tau(\tau(s,r),t) =_{rw_{tt}} \tau(s,\tau(r,t))$.
\end{center}

Using rules $tlr$ and $trr$, we show that the identity laws also hold weakly:

\begin{center}
$s \circ 1_{a} = s \circ \rho_{a} = \tau(\rho_{a},s) =_{rw_{tlr}} s$

$1_{b} \circ s = \rho_{b} \circ s = \tau(s,\rho_{b}) =_{rw_{trr}} s$

\end{center}

With these conditions satisfied, we conclude that $\Pi(A)$ is indeed a weak category. To conclude our proof, we need to show that it also is a weak groupoid. To do that, we need to show that every computational path $s$ has a inverse computational path $s'$. Finding $s'$ is easy, just put $s' = \sigma(s)$. To show that the equalities of the isomorphism hold weakly, we use rules $tr$ and $tsr$:

\begin{center}

$s \circ s' = s \circ \sigma(s) = \tau(\sigma(s),s) =_{rw_{tsr}} \rho_{b}$

$s' \circ s = \sigma(s) \circ s = \tau(s,\sigma(s)) =_{rw_{tr}} \rho_{a}$

With that, we conclude that $\Pi(A)$ is a weak groupoid.

\end{center}
\end{proof}

Since the groupoid $\Pi(A)$ is constructed from a type $A$ and its objects and equalities between these objects (given by computational paths), we define $\Pi(A)$ as the fundamental groupoid of type $A$.

As proved by \cite{Streicher2}, the groupoid model of a type proves that the uniqueness of identity proofs is not derivable in the syntax of type theory. The construction of the fundamental groupoid of a type based on computational paths makes clear that the addition of computational paths to the syntax of type theory will still refute the uniqueness of identity proofs. In this sense, one can interpret this result as the fact that there may be two computational paths $s$ and $t$ between objects $a : A$ and $b : A$ such that $s$ and $t$ are not $rw$-equal.

\subsection{The fundamental 2-groupoid $\Pi_{2}(A)$ of a type}

Using the homotopical interpretation, one can construct a weak $2$-groupoid of a space $A$. This $2$-groupoid, called the fundamental $2$-groupoid of the space $A$, can be constructed by taking $\Pi(A)$ and adding a groupoid structure to each pair of objects of $\Pi(A)$. In this sense, for each pair of objects $a, b \in \Pi(A)$, we have a groupoid $\Pi(a,b)$, in which the objects are homotopical paths between $a$ and $b$ and morphisms are classes of homotopies between these paths. It is a well established fact that $\Pi(A)$ together with the structures $\Pi(a,b)$ forms a weak $2-groupoid$, known as $\Pi_{2}(A)$ \cite{Tom}.

The objective of this subsection is to show that it is possible to construct, using computational paths, the fundamental $2$-groupoid $\Pi_{2}(A)$ of a type $A$. To construct $\Pi_{2}(A)$, we can think of the structure $\Pi(A)$ with the addition of substructures $\Pi(A)(a,b)$ between each pair of objects of $\Pi(A)$. In $\Pi(A)(a,b)$, objects are computational paths between $a : A$ and $b : A$ and morphisms are $rw$-equalities modulo $rw_{2}$-equality. $\Pi(a,b)$ is a strict groupoid. The proof of this fact is analogous to \textbf{proposition 4.1}. The difference is that instead of using $rw$-rules, this proof uses the equivalent $rw_{2}$-rule. Another difference is that the equations hold strictly, since the morphisms are modulo $rw_{2}$-equality. Now we can prove the following proposition:

\begin{proposition}

Given any type $A$, $\Pi_{2}(A)$ is a weak $2$-groupoid.

\end{proposition}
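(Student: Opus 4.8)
The plan is to check that $\Pi_2(A)$ satisfies the axioms of a weak $2$-groupoid, i.e.\ of a bigroupoid: a weak $2$-category in which every $1$-cell is an equivalence and every $2$-cell is an isomorphism. By Proposition~4.1 the underlying $1$-dimensional data of $\Pi_2(A)$ --- namely $\Pi(A)$, with objects $a:A$, $1$-cells the computational paths, horizontal composition $\tau$, and identities $\rho_a$ --- already form a weak groupoid; and, as noted just before the statement, each hom-structure $\Pi(A)(a,b)$, whose objects are computational paths and whose morphisms are $rw$-equalities taken modulo $rw_2$-equality, is a strict groupoid, which supplies vertical composition of $2$-cells together with its strict associativity, unit and inverse laws. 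What is left to produce and verify is: (i) the horizontal composition of $2$-cells and its functoriality (the interchange law); (ii) the associator and the left and right unitor natural families of invertible $2$-cells; (iii) the pentagon and triangle coherence identities; and (iv) the data exhibiting each $1$-cell $s\colon a\to b$ as an equivalence with quasi-inverse $\sigma(s)$.

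For (i), I would define whiskering first: for an $rw$-equality $\alpha\colon s=_{rw}s'$ between paths $a\to b$ and a path $t\colon b\to c$, performing each step of $\alpha$ inside the left argument of $\tau$ yields an $rw$-sequence $\tau(s,t)=_{rw}\tau(s',t)$, and symmetrically for whiskering on the right by a path $a\to b$. Given $\alpha\colon s=_{rw}s'$ and $\beta\colon t=_{rw}t'$ with $t,t'$ paths $b\to c$, the horizontal composite $\beta\ast\alpha\colon\tau(s,t)=_{rw}\tau(s',t')$ is obtained by composing the two whiskerings; the two admissible orders of doing so are $rw_2$-equal by the independence-of-choice rule $cd_2$ (this is precisely the example used to motivate $cd_2$), so $\beta\ast\alpha$ is a well-defined morphism of $\Pi(A)(a,c)$. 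Functoriality of $\ast$ in each argument, preservation of identities, and the interchange law then follow by iterated use of $cd_2$ together with the strict groupoid structure of the homs.

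The coherence data for (ii)--(iv) are read directly off the seven $rw$-rules of Proposition~4.1: the associator at $(s,t,u)$ is the $rw_{tt}$-rewrite $\tau(\tau(s,t),u)=_{rw_{tt}}\tau(s,\tau(t,u))$, the unitors are the $rw_{tlr}$- and $rw_{trr}$-rewrites, and the $2$-cells exhibiting $\sigma(s)$ as a quasi-inverse of $s$ come from the $rw_{tr}$- and $rw_{tsr}$-rewrites; each of these is an invertible $2$-cell simply because $\Pi(A)(a,b)$ is a groupoid. The substantive content of (iii), together with the naturality of the associator and unitors and the triangle identities for $s\dashv\sigma(s)$, then reduces in every case to the same kind of assertion: two parallel $rw$-sequences --- built only from $tt$, $tlr$, $trr$, $tr$, $tsr$ and whiskerings of these --- denote the same morphism of some $\Pi(A)(a,b)$, i.e.\ are $rw_2$-equal. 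I would handle these uniformly rather than one diagram at a time: since $LND_{EQ}-TRS$ is terminating and confluent, two such parallel $rw$-sequences share a common reduct, and the level-$2$ versions of the rules together with $cd_2$ are precisely what is needed to rewrite each of the two sequences into that common reduct within $rw_2$-equality. I expect this last step --- confirming that the $rw_2$-system really contains enough rules to witness every one of the pentagon, triangle, naturality and triangle-identity diagrams --- to be the main obstacle; once it is in place, the individual diagram chases are routine.
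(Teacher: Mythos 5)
Most of your outline coincides with the paper's proof: you take $\Pi(A)$ weak groupoid plus strict hom-groupoids $\Pi(A)(a,b)$ as given, define horizontal composition of $2$-cells by developing the two arguments of $\tau$ (with the order ambiguity settled by $cd_{2}$, exactly the paper's motivating example for that rule), take the associator from the $tt$-rule, the unitors from $tlr$ and $trr$, the interchange law from $cd_{2}$, and the invertibility data for $1$-cells from $\sigma$ together with $tr$ and $tsr$. Up to that point you are following essentially the same route as the paper.

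The genuine gap is in item (iii), the coherence laws, which is also the part of the proposition that goes beyond Proposition~4.1. You propose to dispose of the pentagon, triangle, and naturality conditions uniformly by invoking termination and confluence of $LND_{EQ}-TRS$, claiming that two parallel $rw$-sequences share a common reduct and that the level-$2$ rules plus $cd_{2}$ are ``precisely what is needed'' to connect them. But confluence of $LND_{EQ}-TRS$ is a statement about rewriting of computational paths (the $1$-cells): it produces a common reduct of two $rw$-equal \emph{paths}, and says nothing about whether two parallel $rw$-\emph{sequences} are identified by $rw_{2}$-equality, which is exactly what the coherence diagrams assert. A claim that the $rw_{2}$-system identifies all the relevant parallel $2$-cells is a level-$2$ coherence statement that would itself need proof (you acknowledge this by calling it ``the main obstacle''), so as written the crucial step is deferred rather than established. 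The paper does not attempt any such general argument: it verifies the pentagon and the triangle directly, by computing both composite routes on the explicit $\tau$-expressions (e.g.\ starting from $\tau(s,\tau(r,\tau(p,u)))$ and applying $assoc$, $assoc\circ_{h}[\rho_{s}]_{rw_{2}}$, $[\rho_{u}]_{rw_{2}}\circ_{h}assoc$, respectively $r^{*}_{r}$, $l^{*}_{s}$) and checking that the two sides agree, which is a finite concrete calculation using only $tt$, $tlr$, $trr$. To repair your proposal you would either have to carry out those two diagram chases explicitly, as the paper does, or actually prove the level-$2$ statement you are assuming.
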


\begin{proof}

First of all, we can represent $\Pi_{2}(A)$ graphically as follows:

\bigskip

\begin{center}

\begin{tikzpicture}[>=latex,
    every node/.style={auto},
    arrowstyle/.style={double,->,shorten <=3pt,shorten >=3pt},
    mydot/.style={circle,fill}]

    \coordinate[mydot,label=above:$a$](a)  at (0,0);
    \coordinate[mydot,label=above:$b$](b) at (3,0);
    \coordinate[mydot,label=above:$c$](c) at (6,0);
    \coordinate[mydot,label=above:$d$](d) at (9,0);
    \draw[->]   (a) to[bend left=50]  coordinate (s) node[]{$s$}          (b);
    \draw[->]   (a) to coordinate (t) node[label={[label distance=-1cm]43:t}] {}    (b);
    \draw[->]   (a) to[bend right=70]  coordinate (x) node[,swap]{$x$}          (b);
    \draw[->]   (b) to[bend left=50]  coordinate (r) node[]{$r$}          (c);
    \draw[->]   (b) to coordinate (w) node[label={[label distance=-1cm]40:w}] {}    (c);
    \draw[->]   (b) to[bend right=70]  coordinate (y) node[,swap]{$y$}          (c);
    \draw[->]   (c) to[bend left=50]  coordinate (p) node[]{$p$}          (d);
    \draw[->]   (c) to coordinate (q) node[label={[label distance=-1cm]50:q}] {}    (d);
       \draw[->]   (c) to[bend right=70]  coordinate (z) node[,swap]{$z$}          (d);

    \draw[arrowstyle] (s) to node[right]{[$\alpha]_{rw_{2}}$} (t);
     \draw[arrowstyle] (t) to node[right]{[$\chi]_{rw_{2}}$} (x);
    \draw[arrowstyle] (r) to node[right]{[$\theta]_{rw_{2}}$} (w);
 \draw[arrowstyle] (w) to node[right]{[$\varphi]_{rw_{2}}$} (y);
\draw[arrowstyle] (r) to node[right]{[$\theta]_{rw_{2}}$} (w);
   \draw[arrowstyle] (p) to node[right]{[$\psi]_{rw_{2}}$} (q);
\draw[arrowstyle] (q) to node[right]{[$\phi]_{rw_{2}}$} (z);
\end{tikzpicture}	

\end{center}

We already know that $\Pi(A)$ is a weak groupoid and that $\Pi(A)(a,b)$ is a strict groupoid. To finish the proof that $\Pi_{2}(A)$ is a weak $2$-groupoid, we need to show that $\Pi_{2}(A)$ is a bicategory.

As one can check in \cite{Tom}, the proof that a structure is a bicategory follows a considerable number of steps. To achieve this proof, one of the first steps is to define how a $2$-morphism (as we already know, in $\Pi_{2}(A)$ $2$-morphisms are $rw$-equalities modulo $rw_{2}$-equality) can be composed horizontally. Given an $rw$-sequence $[\alpha]_{rw_{2}}: [s = \alpha_{1}, \alpha_{2}, ... , \alpha_{n} = t]_{rw_{2}}$ and an $rw$-sequence $[\theta]_{rw_{2}}: [r = \theta_{1}, \theta_{2}, ... , \theta_{m} = w]_{rw_{2}}$, we define the horizontal composition $[\alpha]_{rw_{2}} \circ_{h} [\theta]_{rw_{2}}$ as the following $rw$-sequence:

$[\alpha]_{rw_{2}} \circ_{h} [\theta]_{rw_{2}} = [\tau(s,r) = \tau(\alpha_{1},\theta_{1}), \tau(\alpha_{2}, \theta_{1}), ... \tau(\alpha_{n} = t,\theta_{1}), \tau(\alpha_{n}, \theta_{2}) ... ,\tau(\alpha_{n}, \theta_{m}) = \tau(t,w)]_{rw_{2}}$

With the horizontal composition well defined, we need to check if it is associative and respect the identity laws. For the associativity, we need to show that there is a natural isomorphism $assoc$ between $(([\psi]_{rw_{2}} \circ_{h} [\theta]_{rw_{2}}) \circ_{h} [\alpha]_{rw_{2}})$ and  $([\psi]_{rw_{2}} \circ_{h} ([\theta]_{rw_{2}} \circ_{h} [\alpha]_{rw_{2}}))$. To prove that, we will use the well established fact that a natural transformation is a natural isomorphism iff every component is an isomorphism \cite{Steve2}. That way, we only need to prove that there is an isomorphism between every component of $(([\psi]_{rw_{2}} \circ_{h} [\theta]_{rw_{2}}) \circ_{h} [\alpha]_{rw_{2}})$ and  $([\psi]_{rw_{2}} \circ_{h} ([\theta]_{rw_{2}} \circ_{h} [\alpha]_{rw_{2}}))$. In fact, since we are working in a groupoid structure, we just need to check that there is a morphism between each component (since every morphism is an isomorphism in a groupoid). To do this, we can get a generic component $\tau(\alpha_{x}, \tau(\theta_{y}, \psi_{z}))$ of  $(([\psi]_{rw_{2}} \circ_{h} [\theta]_{rw_{2}}) \circ_{h} [\alpha]_{rw_{2}})$ ($x, y, z$ being suitable naturals that respect the order of the horizontal composition) and show that it has a morphism to the equivalent component $\tau(\tau(\alpha_{x},\theta_{y}), \psi_{z})$ of  $([\psi]_{rw_{2}} \circ_{h} ([\theta]_{rw_{2}} \circ_{h} [\alpha]_{rw_{2}}))$. This morphism is established by $\sigma(tt)$ as follows:

\begin{center}
$\tau(\alpha_{x}, \tau(\theta_{y}, \psi_{z})) =_{rw_{\sigma(tt)}} \tau(\tau(\alpha_{x},\theta_{y}), \psi_{z})$
\end{center}

To prove the identity law, we use a similar idea. We need to check the natural isomorphism  $r^{*}_{s}$ between  $[\alpha]_{rw_{2}} \circ_{h} [\rho_{\rho_{a}}]_{rw_{2}}$  and $[\alpha]_{rw_{2}}$. To do that, we prove find a morphism between components $\tau(\rho_{\rho_{a}}, \alpha_{y}))$ and $\alpha_{y}$:

\begin{center}
$\tau(\rho_{\rho_{a}}, \alpha_{y}) =_{rw_{tlr}} \alpha_{y}$.
\end{center}

We now need to check the second identity law, i.e., the natural isomorphism $l^{*}_{s}$ between  $( [\rho_{\rho_{b}}]_{rw_{2}} \circ_{h} [\alpha]_{rw_{2}})$ and $[\alpha]_{rw_{2}}$. Taking components $\tau(\alpha_{y}, \rho_{\rho_{a}})$ and $\alpha_{y}$:

\begin{center}
$\tau(\alpha_{y}, \rho_{\rho_{a}}) =_{rw_{trr}} \alpha_{y}$.
\end{center}

Also associated to the horizontal composition, we need now to check the interchange law, i.e., we need to check:

\begin{center}
$([\varphi]_{rw_{2}} \circ [\theta]_{rw_{2}}) \circ_{h} ([\chi]_{rw_{2}} \circ [\alpha]_{rw_{2}}) = ([\varphi]_{rw_{2}} \circ_{h} [\chi]_{rw_{2}}) \circ ([\theta]_{rw_{2}} \circ_{h} [\alpha]_{rw_{2}})$
\end{center}

From  $(([\varphi]_{rw_{2}} \circ [\theta]_{rw_{2}}) \circ_{h} ([\chi]_{rw_{2}} \circ [\alpha]_{rw_{2}}))$, we have:

\begin{center}
 $(([\varphi]_{rw_{2}} \circ [\theta]_{rw_{2}}) \circ_{h} ([\chi]_{rw_{2}} \circ [\alpha]_{rw_{2}})) =$

 $[\tau(\theta, \varphi)]_{rw_{2}} \circ_{h} [\tau(\alpha, \chi)]_{rw_{2}}=$

$[\theta_{1}, ..., \theta_{n} = \varphi_{1}, ..., \varphi_{n'}]_{rw_{2}} \circ_{h} [(\alpha_{1}, ..., \alpha_{m} = \chi_{1}, ... \chi_{m'}]_{rw_{2}} = $

$[\tau(\alpha_{1}, \theta_{1}),..., \tau(\alpha_{m} = \chi_{1}, \theta_{1}), ..., \tau(\chi_{n},\theta_{1}), ..., \tau(\chi_{n},\theta_{m'} = \varphi_{1}),..., \tau(\chi_{n},\varphi_{n'})]_{rw_{2}}$
\end{center}

From   $(([\varphi]_{rw_{2}} \circ_{h} [\chi]_{rw_{2}}) \circ ([\theta]_{rw_{2}} \circ_{h} [\alpha]_{rw_{2}})))$:

\begin{center}
 $(([\varphi]_{rw_{2}} \circ_{h} [\chi]_{rw_{2}}) \circ ([\theta]_{rw_{2}} \circ_{h} [\alpha]_{rw_{2}}))(r \circ s) =$

 $([\tau(\chi_{1}, \varphi_{1}),...,\tau(\chi_{n},\varphi_{1}),...,\tau(\chi_{n},\varphi_{n'})]_{rw_{2}} \circ [\tau(\alpha_{1},\theta_{1}),...,\tau(\alpha_{m},\theta_{1}),...,\tau(\alpha_{m},\theta_{m'})]_{rw_{2}} = $

$[\tau(\alpha_{1},\theta_{1}),...,\tau(\alpha_{m},\theta_{1}),...,\tau(\alpha_{m},\theta_{m'}),...,\tau(\chi_{1}, \varphi_{1}),...,$ $\tau(\chi_{n},\varphi_{1}),...,\tau(\chi_{n},\varphi_{n'})]_{rw_{2}}$

\end{center}

As one can see, the sole difference between the two results is the order that the internal $rw$-sequences have been developed in each transitivity. This is a perfect case to use the independence of choice inside the transitivity, i.e., the $rw_{2}$-rule known as $cd_{2}$. By one application of $cd_{2}$, we conclude that $[\tau(\alpha_{1}, \theta_{1}),..., \tau(\alpha_{m} = \chi_{1}, \theta_{1}), ..., \tau(\chi_{n},\theta_{1}), ..., \tau(\chi_{n},\theta_{m'} = \varphi_{1}),..., \tau(\chi_{n},\varphi_{n'})]_{rw_{2}}$ $= [\tau(\alpha_{1},\theta_{1}),...,\tau(\alpha_{m},\theta_{1}),...,\tau(\alpha_{m},\theta_{m'})$ $,...,\tau(\chi_{1}, \varphi_{1}),...,\tau(\chi_{n},\varphi_{1}),...,\tau(\chi_{n},\varphi_{n'})]_{rw_{2}}$.

To end our proof, since we are working with a weak structure, we need to check the coherence laws for a bicategory. The coherence laws are given by Mac Lane's pentagon and triangle \cite{Tom}:

\begin{center}
\begin{tikzpicture}[>=latex,
    every node/.style={auto},
    arrowstyle/.style={double,->,shorten <=3pt,shorten >=3pt},
    mydot/.style={circle,fill}]

    \coordinate[mydot,label=left:$((u\circ p) \circ r) \circ s$](a)  at (2,0);
    \coordinate[mydot,label=left:$(u \circ p) \circ (r \circ s)$](b) at (0,-3);
    \coordinate[mydot,label=right:$(u \circ(p \circ r)) \circ s$](c) at (5,0);
    \coordinate[mydot,label=right:$u \circ((p \circ r) \circ s) $](d) at (7,-3);
     \coordinate[mydot,label=below:$u \circ (p \circ (r \circ s))$](e) at (3.5,-6);

    \draw[->]   (a) to coordinate (s) node[,swap] {$assoc$}    (b);
    \draw[->]   (a) to coordinate (t) node[,swap] {$assoc \circ_{h} [\rho_{s}]_{rw_{2}}$}    (c);
    \draw[->]   (c) to coordinate (x) node[] {$assoc$}    (d);
     \draw[->]   (b) to coordinate (y) node[,swap] {$assoc$}    (e);
     \draw[->]   (d) to coordinate (z) node[] {$[\rho_{u}]_{rw_{2}} \circ_{h} assoc$}    (e);
\end{tikzpicture}

\bigskip

\begin{tikzpicture}[>=latex,
    every node/.style={auto},
    arrowstyle/.style={double,->,shorten <=3pt,shorten >=3pt},
    mydot/.style={circle,fill}]

  \coordinate[mydot,label=left:$(r \circ \rho_{b}) \circ s$](a)  at (0,0);
    \coordinate[mydot,label=right:$r \circ (\rho_{b} \circ s)$](b) at (2,0);
    \coordinate[mydot,label=below:$r \circ s$](c) at (1,-2);

        \draw[->]   (a) to coordinate (s) node[,swap] {$assoc$}    (b);
       \draw[->]   (a) to coordinate (t) node[,swap] {$r^{*}_{r} \circ_{h} [\rho_{s}]_{rw_{2}}$}    (c);
      \draw[->]   (b) to coordinate (x) node[] {$[\rho_{r}]_{rw_{2}} \circ_{h} l^{*}_{s}$}    (c);
\end{tikzpicture}
\end{center}

We need to show that the diagrams above commute. The equations are straightforward. Starting with the pentagon, we can start from $((u\circ p) \circ r) \circ s = \tau(s,\tau(r,\tau(p,u)))$ and go to the right of the diagram:

\begin{center}
$(assoc \circ_{h} [\rho_{s}]_{rw_{2}})(\tau(s,\tau(r,\tau(p,u)))) = \tau(s,assoc(\tau(r,\tau(p,u)))) =$

$\tau(s,\tau(\tau(r,p),u))$

$assoc(\tau(s,\tau(\tau(r,p),u))) = \tau(\tau(s,\tau(r,p)),u)$

$([\rho_{u}]_{rw_{2}} \circ_{h} assoc)(\tau(\tau(s,\tau(r,p)),u)) = \tau(assoc(\tau(s,\tau(r,p))), u) = $

$\tau(\tau(\tau(s,r),p)), u) = u \circ (p \circ (r \circ s))$
\end{center}

Starting from the same $((u\circ p) \circ r) \circ s = \tau(s,\tau(r,\tau(p,u)))$ and going bottom left:

\begin{center}
$assoc(\tau(s,\tau(r,\tau(p,u)))) = \tau(\tau(s,r),\tau(p,u))$

$assoc(\tau(\tau(s,r),\tau(p,u))) = \tau(\tau(\tau(s,r),p),u) = u \circ (p \circ (r \circ s))$
\end{center}

We conclude that the pentagon commutes. For the triangle, we can start with $((r \circ \rho_{b}) \circ s) = \tau(s,\tau(\rho_{b},r))$ and go to the right:

\begin{center}
$assoc(\tau(s,\tau(\rho_{b},r))) = \tau(\tau(s,\rho_{b}),r)$

$([\rho_{r}]_{rw_{2}} \circ_{h} l^{*}_{s})\tau(\tau(s,\rho_{b}),r) = \tau(l^{*}_{s}(\tau(s,\rho_{b})),r) = $

$\tau(s,r) = r \circ s$
\end{center}

Now starting with the same $((r \circ \rho_{b}) \circ s) = \tau(s,\tau(\rho_{b},r))$ and going to to bottom right:

\begin{center}
$(r^{*}_{r} \circ_{h} [\rho_{s}]_{rw_{2}})\tau(s,\tau(\rho_{b},r)) = \tau(s,r^{*}_{r}(\tau(\rho_{b},r))) =$

$\tau(s,r) = r \circ s$
\end{center}

The triangle also commutes. The coherence laws hold. With that, we finish the proof that $\Pi_{2}(A)$ is a weak $2$-groupoid.

\end{proof}

With this proof, we conclude that, using computational paths, it is possible to construct the fundamental $2$-groupoid of a type $A$. In fact, using the same reasoning that led us to come up with $rw_{2}$-equality, one could think of $rw_{3}$-equality, $rw_{4}$-equality, etc. That way, it would be possible to think of the possibility of even higher fundamental groupoids. If we continue this process to the infinite, we would think of a possible weak infinite groupoid $\Pi_{\infty}(A)$, also called weak $\omega$-groupoid. As a matter of fact, it has already been proved that it is possible to obtain such structure, as shown in \cite{lumsdaine1, Benno}. Our objective, in a future work, is to obtain similar results using the concept of computational paths. The same way that we used to construct the fundamental $2$-groupoid $\Pi_{2}(A)$, we could use computational paths and infinite levels of $rw$-equalities to construct $\Pi_{\infty}(A)$. The problem is that $\Pi_{\infty}(A)$ is a complex structure and to prove that it is a weak $\omega$-groupoid is not an easy task. For that reason, obtaining this construct will be the main focus of some of our future works.

\section{Conclusion}

The main objective of this work was to construct a groupoid model for a type using the concept of computational paths. Based on the idea that it is possible to introduce an entity known as computational paths to the syntax of type theory, we used this newly added entity to construct a groupoid model for a type $A$. We have showed that this approach is fundamentally different from the groupoid constructed using homotopical paths, since computational paths are de facto elements of the syntax of type theory, instead of being only semantical interpretations.

To achieve our results, we have showed that it is possible to think of reductions between two different computational paths. Based on these reductions, we have constructed a structure in which objects are elements of a type $A$ and morphisms are computational paths. We have showed that this structure is, in fact, a weak groupoid, and called it the fundamental groupoid of a type $A$. We have gone further, defining a higher structure and proving that this structure is also a groupoid. That way, using computational paths, we have obtained the fundamental $2$-groupoid of the type $A$. A rather natural way forward is to show that it is indeed possible to obtain even higher structures, with the main goal of using computational paths to obtaining a possible fundamental $\infty$-groupoid.

\bibliographystyle{ieeetr}
\bibliography{Biblio}

\begin{thebibliography}{10}

\bibitem{Vlad1}
V.~Voevodsky, ``Univalent foundations and set theory.''
\newblock Univalent Foundations and Set Theory, Lecture at IAS, Princeton, New
  Jersey, Mar 2014.

\bibitem{Steve1}
S.~Awodey, ``Type theory and homotopy,'' in {\em Epistemology versus Ontology}
  (P.~Dybjer, S.~Lindstr{\"o}m, E.~Palmgren, and G.~Sundholm, eds.), vol.~27 of
  {\em Logic, Epistemology, and the Unity of Science}, pp.~183--201, Springer
  Netherlands, 2012.

\bibitem{hott}
T.~{Univalent Foundations Program}, {\em Homotopy Type Theory: Univalent
  Foundations of Mathematics}.
\newblock Institute for Advanced Study:
  \url{http://homotopytypetheory.org/book}, 2013.

\bibitem{Ruy1}
R.~J. G.~B. de~Queiroz and A.~G. de~Oliveira, ``Propositional equality,
  identity types, and direct computational paths,'' 2011.
\newblock http://arxiv.org/abs/1107.1901, 2011 (latest version 2013).

\bibitem{Art1en}
A.~F. Ramos, R.~J. G.~B. de~Queiroz, and A.~G. de~Oliveira, ``On the identity
  type as the type of computational paths,'' 2015.
\newblock http://arxiv.org/abs/1504.04759, 2015. Accepted for presentation in
  {LSFA 2015 - 10th Workshop on Logical and Semantic Frameworks, with
  Applications\/}, Natal, RN, 31-Ago a 01-Set, 2015.

\bibitem{Ruy5}
R.~J. G.~B. de~Queiroz and A.~G. de~Oliveira, ``Natural deduction for equality:
  {T}he missing entity,'' in {\em Advances in Natural Deduction} (L.~Pereira,
  E.~Haeusler, and V.~de~Paiva, eds.), pp.~63--91, Springer, 2014.

\bibitem{lambda}
J.~R. Hindley and J.~P. Seldin, {\em Lambda-calculus and combinators: an
  introduction}.
\newblock Cambridge University Press, 2008.

\bibitem{Anjo1}
A.~G. de~Oliveira, ``Proof transformations for labelled natural deduction via
  term rewriting,'' 1995.
\newblock Master’s thesis, Depto. de Inform{\'a}tica, Universidade Federal de
  Pernambuco, Recife, Brazil, April 1995.

\bibitem{Ruy2}
A.~G. de~Oliveira and R.~J. G.~B. de~Queiroz, ``Term rewriting with labelled
  deductive systems,'' in {\em Proceedings of Brazilian Symposium on Artificial
  Intelligence (SBIA’94)}, pp.~59--72, 1994.

\bibitem{Ruy3}
A.~G. de~Oliveira and R.~J. G.~B. de~Queiroz, ``A normalization procedure for
  the equational fragment of labelled natural deduction,'' {\em Logic Journal
  of IGPL}, vol.~3, no.~(2--3), pp.~243--290, 1999.

\bibitem{RuyAnjolinaLivro}
R.~J. G.~B. de~Queiroz, A.~G. de~Oliveira, and D.~M. Gabbay, {\em The
  Functional Interpretation of Logical Deduction}.
\newblock World Scientific, 2011.

\bibitem{Steve2}
S.~Awodey, {\em Category theory}.
\newblock Oxford University Press, 2nd~ed., 2010.

\bibitem{Streicher2}
M.~Hofmann and T.~Streicher, ``The groupoid model refutes uniqueness of
  identity proofs,'' in {\em Logic in Computer Science, 1994. LICS'94.
  Proceedings., Symposium on}, pp.~208--212, IEEE, 1994.

\bibitem{Tom}
T.~Leinster, {\em Higher Operads, Higher Categories}.
\newblock London Mathematical Society Lecture Note Series (Book 298), Cambridge
  University Press, 2004.
\newblock Also http://arxiv.org/abs/math/0305049, May, 2003.

\bibitem{lumsdaine1}
P.~L. Lumsdaine, ``Weak $\omega$-categories from intensional type theory,'' in
  {\em Typed lambda calculi and applications}, vol.~5608 of {\em LNCS},
  pp.~172--187, Springer, 2009.

\bibitem{Benno}
B.~van~den Berg and R.~Garner, ``Types are weak $\omega$-groupoids,'' {\em
  Proceedings of the London Mathematical Society}, vol.~102, no.~2,
  pp.~370--394, 2011.

\end{thebibliography}

\end{document}